\newtheorem{lemma}{Lemma}
\newtheorem{proposition}{Proposition}
\def\phi{\varphi}
\def\SIR{\mathsf{SIR}}
\def\l{\left}
\def\r{\right}
\def\({\left(}
\def\){\right)}
\def\b0{{\mathbf{0}}}
\newcommand{\Pout}{P_{\mathsf{out}}}
\begin{document}
\title{\huge \setlength{\baselineskip}{30pt} Cognitive Energy Harvesting and Transmission from a Network Perspective}
\author{
\authorblockN{Seunghyun Lee}
\authorblockA{
National University of Singapore\\
Email: elelees@nus.edu.sg}
\and
\authorblockN{Kaibin Huang}
\authorblockA{
Hong Kong Polytechnic University\\
Email: huangkb@ieee.org}
\and
\authorblockN{Rui Zhang}
\authorblockA{
National University of Singapore\\
Email: elezhang@nus.edu.sg \vspace{-15pt}}} \maketitle 

\begin{abstract}
Wireless networks can be self-sustaining by harvesting energy from radio-frequency (RF) signals. Building on classic cognitive radio networks, we propose a novel method for network coexisting where mobiles from a secondary network, called  \emph{secondary transmitters} (STs),  either harvest energy from transmissions by nearby transmitters from a primary network, called  \emph{primary transmitters} (PTs), or transmit information if PTs are sufficiently  far away; STs store harvested energy in rechargeable batteries with finite  capacity and use all available energy for subsequent transmission when batteries are fully charged. 
In this model, each PT is centered at a  \emph{guard zone} and a \emph{harvesting zone} that are disks with given radiuses; a ST harvests  energy if it lies in some harvesting zone,   transmits fixed-power signals if it is outside all guard zones or else idles. Based on this model,  the spatial throughput  of the secondary network is maximized using a stochastic-geometry model where PTs and STs are modeled as independent homogeneous Poisson point processes (HPPPs), 
under  the outage constraints for coexisting networks and obtained in a simple closed-form. It is observed from the result  that the maximum secondary throughput  decreases linearly with the growing PT density, and the optimal ST density is inversely proportional  to  the derived transmission probability for STs. 
\end{abstract}

\section{Introduction} \label{Sec:Intro}

Powering mobile devices by harvesting ambient energy such as  solar energy, kinetic activities and electromagnetic radiation  makes mobile networks not only environmentally friendly but also self-sustaining. In this paper, we propose a novel method for network coexisting where mobiles in a cognitive-radio (secondary) network harvest radio-frequency (RF) energy from transmissions in a primary network besides opportunistically accessing the spectrum licensed to the primary network. The throughput of the secondary network is studied based on a stochastic-geometry model where the primary and secondary transmitters (PTs and STs) are modeled as independent homogeneous Poisson point processes (HPPPs). In this model, each PT is protected from the interference from STs by a \emph{guard zone} and delivers significant RF energy to STs located in a \emph{harvesting zone}. This model is applied to maximize the spatial throughput of the secondary network over the secondary transmission power and node density under outage-probability constraints for the primary and secondary networks.

Cognitive radios enable efficient spectrum usage by allowing a secondary network to share the spectrum licensed to a primary network 
\cite{Haykin:CognitiveRadio:2005}. Besides active development of  algorithms
for opportunistic transmissions by secondary users (see e.g., \cite{QingZhao:DynamicSpectrumAccess:2007, RuiZhang:DynamicResourceAllocation:2010}), research has been carried out on characterizing the throughput of coexisting networks based on stochastic-geometry network models.  The trade-off of the capacities between two coexisting networks \cite{Yin:ScalingLaws:2010}, \cite{ Huang:SpectrumSharing:2009} and among multiple networks \cite{Lee:SpectrumSharingTxCapacity:2011} has been studied. Moreover, outage probability of a Poisson-distributed cognitive radio network considering guard zones  has been analyzed in  \cite{Lee:PoissonCognitiveNetworks:2012}. 
Our work  also considers Poisson-distributed cognitive radio networks but differs from the prior studies in that the secondary users power their transmitters by harvesting energy from primary transmissions  instead of using traditional power supplies such as batteries. 

Recently, wireless communication powered by energy harvesting has emerged to be a new and active research area.  Given energy harvesting, the existing transmission algorithms need to be redesigned for different  wireless systems \cite{Ho&Zhang:OptimalEnergyAllocation:2012}, \cite{Ozel:EnergyHarvesting:2011} to account for the randomness and causality of energy arrivals. 
For broadcast channels, multi-antenna transmission for simultaneous wireless information and energy transfer has been studied in \cite{Yang:BroadcastingEnergyHarvesting:2012}. Also, the throughput of a MANET powered by energy harvesting has been investigated in \cite{Huang:ThroughputAdHocEnergyHarvesting:2012} where the network spatial throughput is maximized by optimizing transmission power under an outage constraint.  In contrast, our work studies a different type of  network, namely cognitive radio network powered by opportunistic RF-based energy harvesting \cite{Liu:OpportunisticEnergyHarvesting:2012}.

In this paper, it is assumed that time is slotted, and both the channels and HPPPs are quasi-static. The  harvesting zone and secondary transmission power are constrained to be sufficiently  small such that a ST lying in a harvesting zone is fully charged within a time slot. The main contributions of this work are summarized as follows. Based on the aforementioned  network model, we derive the transmission probability of STs using Markov-chain theory. Moreover, the maximum  network throughput of the secondary network under the outage  constraints is derived  by a Poisson approximation of the transmitting STs. It is shown that the maximum secondary throughput is linearly decreasing with the increasing PT density, and the optimal ST density is inversely proportional to the transmission probability of STs.

\section{Model and Metric} \label{Sec:SysMod}

\subsection{Network Model}

{\begin{figure} 
\centering
\includegraphics[width=7.5cm]{./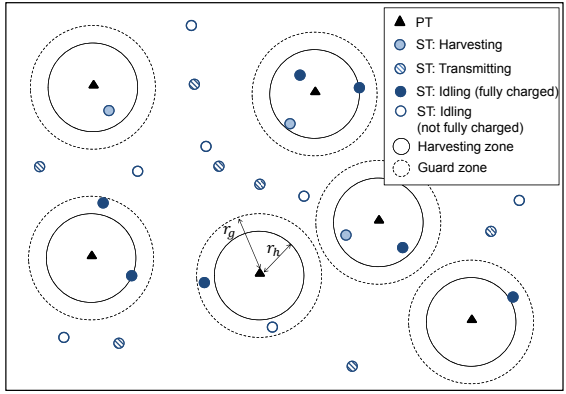}\vspace{1pt}
\caption{A stochastic-geometry model of the cognitive radio network where PTs and STs are distributed as independent  HPPPs. Each PT/ST has its intended receiver at an unit distance (not shown in the figure). A ST inside the harvesting zone harvests energy from the transmission by a nearby PT. To protect primary transmissions, all STs inside guard zones  are prohibited from transmission.}\vspace{-15pt}
\label{Fig:NetworkModel}
\end{figure} 

As shown in Fig.~\ref{Fig:NetworkModel}, the PTs and STs are distributed as independent HPPPs denoted by $\Phi_p = \{X\}$ and $\Phi_s = \{Y\}$ with density $\lambda_p$ and $\lambda_s$, respectively, where $X,Y \in \mathbb{R}^2$ denote the coordinates of the PTs and STs, respectively. Each transmitter transmits data with fixed power to an intended receiver at an unit distance. Time is partitioned into slots with unit duration. The point processes are fixed within each slot and vary independently over different slots. STs are prevented from transmission when they lie in the guard zones to protect primary transmissions. Specifically, consider a disk with radius $r_g$ centered at each $X$ and let
\begin{equation}
\mathcal{G} = \bigcup_{X \in \Phi_p}b(X,r_g)
\end{equation}
where $b(x,r)\subset\mathds{R}^2$ represents a disk of radius $r$ centered at $x$ and hence $b(X, r_g)$ is the guard zone with radius $r_g$ for protecting $X \in \Phi_p$. A ST $Y \in \Phi_s$ cannot transmit when $Y \in \mathcal{G}$. Note that intuitively a guard zone should be centered at  a primary receiver but is defined as above to simplify analysis \cite{QingZhao:DynamicSpectrumAccess:2007}. Let $R$ denote the distance between a typical ST, denoted as $Y'$, and its nearest PT. The probability $p_g$ that $Y'$ lies in a  guard zone  is given as \cite{PoissonProcesses}
\begin{align}
p_g &= \Pr(Y' \in \mathcal{G})  \label{Eq:GuardZoneDef}\\
&= \Pr(R \leq r_g) \label{Eq:GuardZoneRe}\\
&=1-e^{-\pi  \lambda_p r_g^2}. \label{Eq:GuardZoneRe:a}
\end{align}
 
\subsection{Energy-Harvesting Model}

Each ST $Y \in \Phi_s$ harvests energy from transmission by the nearest PT when it is located inside a \emph{harvesting zone}. Let $b(X, r_h)$ with $0<r_h <r_g$ represent the harvesting zone centered at $X\in\Phi_p$ such that a ST $Y \in b(X, r_h)$ harvests energy from the PT $X$. Channels are assumed to have no fading for simplicity. Let $P_p$ and $P_s$ denote the fixed  transmission power of PTs and STs, respectively. It is assumed that $P_p \gg P_s$.  Given $R\leq r_h$, the harvesting power of $Y'$ is $\eta P_p R^{-\alpha}$ where $0<\eta\leq 1$ denotes the harvesting efficiency and $\alpha >2$ is the path-loss exponent. Similarly as \eqref{Eq:GuardZoneRe:a}, the probability $p_h$ that a typical ST lies inside a harvesting zone is
\begin{equation}
p_h = 1-e^{-\pi  \lambda_p r_h^2}.
\end{equation} 
  We assume that each ST has finite battery capacity $P_s$ that is the same as its transmission power. Upon the battery being fully charged,  a ST transmits in next slot if it is outside all guard zones. 

\subsection{Performance Metric}
 Let $p_t$ denote the transmission probability of a typical ST $Y'$. Assuming interference-limited networks, the received signal-to-interference-ratio (SIR) is required to exceed a  target SIR for reliable transmission. Let $\theta_p$ and $\theta_s$ be the target SIRs for the typical primary and secondary receivers, respectively. The outage probability  is then defined as $\Pout^{(p)}=\Pr(\SIR^{(p)}<\theta_p)$ for the primary network  and  $\Pout^{(s)}=\Pr(\SIR^{(s)}<\theta_s)$ for the secondary network. Outage-probability constraints are applied such that  $\Pout^{(p)}\leq \epsilon_p$ and $\Pout^{(s)}\leq \epsilon_s$ with $0 < \epsilon_p, \epsilon_s <1$. Given fixed PT density $\lambda_p$ and transmission power $P_p$, the performance metric is  the spatial network-throughput $\mathcal{C}_s$ (bit/s/Hz/unit-area) of the secondary network given by
\begin{equation} 
\mathcal{C}_s = p_t \lambda_s \log_2(1+\theta_s) \label{Eq:NetThroughput}
\end{equation} 
under the outage constraints.

\section{Transmission Probability of Secondary Transmitter} \label{Sec:TxProb}

In this section, the transmission probability of a typical ST is analyzed. Note that the  minimum power harvested by a ST is $\eta P_p r_h^{-\alpha}$. Therefore, the battery of an energy-harvesting ST is fully charged within a slot if  $0< P_s \leq \eta P_pr_h^{-\alpha}$ and this scenario  is assumed in this paper.
 It then follows that the battery power level can only be 0 or $P_s$. Consider the finite-state Markov chain with the state space $\{0,1\}$ and let the state $0$ and $1$ denote the battery level of power $0$ and $P_s$, respectively. Furthermore, let $\mathbf{P}$ represent the transition-probability matrix that can be obtained as
\begin{equation}
\mathbf{P}=\l[\begin{array}{cc}
1-p_h & p_h \\ 1-p_g & p_g  \end{array}\r]
\end{equation}
with $p_g$ and $p_h$ given in the preceding section. 
The steady-state probability is then given by the vector $\boldsymbol{\pi}=[\pi_0\,\,\,\pi_1]$ using Markov-chain theory as shown in the proof of  the following proposition.  Note that $\pi_1$ represents the probability that the battery is fully charged. 


\begin{proposition} \label{Prop:TxProb:Case1}
For $0<P_s \leq \eta P_p r_h^{-\alpha}$, the transmission probability of a typical ST is
\begin{equation}
p_t = \frac{p_h(1-p_g)}{p_h+1-p_g} \label{Eq:TxProb:Case1}.
\end{equation}
\end{proposition}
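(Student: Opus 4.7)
The plan is to express $p_t$ as the product of two probabilities: (i) the steady-state probability $\pi_1$ that the battery of the typical ST is fully charged at the start of a slot, and (ii) the probability $1-p_g$ that the ST lies outside every guard zone. The first factor is obtained by solving the stationary equation of the two-state Markov chain, and the second is read off directly from \eqref{Eq:GuardZoneRe:a}. The decomposition is justified by the quasi-static assumption stated in Section~\ref{Sec:SysMod}: the HPPPs $\Phi_p$ and $\Phi_s$ are drawn afresh and independently across slots, so the event ``$Y'$ is outside $\mathcal{G}$ in the current slot'' is independent of the battery state entering that slot, which is a functional of past slots only.

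First I would verify that the given matrix $\mathbf{P}$ correctly encodes the dynamics. Starting from state $0$ (empty), the battery becomes full by the end of the slot iff $Y'\in b(X,r_h)$ for its nearest PT, which has probability $p_h$; the ST cannot transmit from state $0$ because the harvesting zone lies strictly inside the guard zone ($r_h<r_g$). Starting from state $1$ (full), the ST transmits and depletes its battery iff $Y'\notin\mathcal{G}$, an event of probability $1-p_g$, and otherwise the battery remains full. These four transitions match the rows of $\mathbf{P}$.

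Next I would compute the stationary distribution $\boldsymbol{\pi}=[\pi_0\ \pi_1]$ from $\boldsymbol{\pi}\mathbf{P}=\boldsymbol{\pi}$ together with $\pi_0+\pi_1=1$. The balance equation $\pi_0 p_h = \pi_1(1-p_g)$ immediately yields the ratio $\pi_1/\pi_0 = p_h/(1-p_g)$, and the normalization condition then gives
\begin{equation}
\pi_1 = \frac{p_h}{p_h+1-p_g},\qquad \pi_0 = \frac{1-p_g}{p_h+1-p_g}.
\end{equation}
Multiplying $\pi_1$ by $1-p_g$ produces the claimed expression \eqref{Eq:TxProb:Case1}.

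The only nontrivial point is the independence step in the first paragraph; everything else is a short linear-algebra computation. I would make this explicit by noting that the battery state $X_n$ at the start of slot $n$ is measurable with respect to $\{\Phi_p^{(k)},\Phi_s^{(k)}\}_{k<n}$, whereas the indicator $\mathbf{1}\{Y'\notin\mathcal{G}\}$ in slot $n$ depends only on $\Phi_p^{(n)}$, so the two factors are independent by the quasi-static assumption. I should also remark that the chain is irreducible and aperiodic whenever $0<p_h<1$ and $0<p_g<1$, so the steady-state distribution is unique and the stationary probabilities indeed describe the long-run fraction of slots spent in each state.
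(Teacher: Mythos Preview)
Your proposal is correct and follows essentially the same route as the paper: solve $\boldsymbol{\pi}\mathbf{P}=\boldsymbol{\pi}$ for the stationary distribution and then multiply $\pi_1$ by $1-p_g$. You add welcome rigor (verifying $\mathbf{P}$, justifying the independence of the battery state and the current-slot guard-zone indicator, and noting irreducibility/aperiodicity), but the argument is the same as the paper's.
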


\begin{proof}
$\boldsymbol{\pi}$ is the left eigenvector of $\mathbf{P}$ corresponding to the unit eigenvalue such that
\begin{equation}
\boldsymbol{\pi}\mathbf{P}=\boldsymbol{\pi}.
\end{equation}
From this equation, the distribution of the battery level at a typical ST is obtained as 
\begin{equation}
\pi_0 = \frac{1-p_g}{p_h+1-p_g}, \quad \pi_1 = \frac{p_h}{p_h+1-p_g}.
\end{equation}
Consequently, the desired result in \eqref{Eq:TxProb:Case1} is obtained by multiplying $\pi_1$ with $1-p_g$.
\end{proof}
It can be observed from the above result that  $p_t$ is independent with   $P_s$, provided that $0<P_s \leq \eta P_p r_h^{-\alpha}$, since the battery of an energy-harvesting ST is fully charged  over one slot if it is inside a harvesting zone.

\section{Network Throughput Maximization} \label{Sec:NetThroughput}
In this section, the network throughput of the secondary network defined in  \eqref{Eq:NetThroughput} is investigated under  the outage  constraints.

\subsection{Outage Probability}

There are four types of interference in the network model. Let $I_{pp}$, $I_{ps}$, $I_{ss}$ and $I_{sp}$ denote the aggregate interference power  from all PTs to a typical  primary receiver/secondary receiver, and from all transmitting STs to a typical  secondary receiver/primary receiver, respectively. Then the outage probability for the typical primary receiver located at the origin can be written as
\begin{equation}
\Pout^{(p)}  = \Pr \l(\frac{P_p}{I_{pp} + I_{sp}} < \theta_p  \r),
\end{equation} 
where $I_{pp} = \sum_{X\in\Phi_p}P_p|X|^{-\alpha}$, and $|X| \in \mathbb{R}_+$ is the distance between $X$ and the  origin.
It is important to note that $I_{sp}$ is a summation over the process of transmitting STs that  is not necessarily a HPPP. As a result, the analysis of  the distribution of $I_{sp}$  is intractable. To overcome this difficulty, we approximate the transmitting ST process as a HPPP with density $p_t\lambda_s$ and assume that it is independent with $\Phi_p$. This assumption is validated by simulation in the sequel. It is worth mentioning that a similar approximation  is also used  in  \cite{Lee:PoissonCognitiveNetworks:2012}.  Based on this approximation, we obtain the following lemma.  
\begin{lemma} \label{Lemma:Approx:PoutPrimary}
By approximating  the process of transmitting STs as a HPPP with density $p_t\lambda_s$, the outage probability of the primary network is given as 
\begin{equation} \label{Approx:PoutPrimary}
\Pout^{(p)} \approx \Pr\l(\sum_{T \in \Lambda(\tau_p)} |T|^{-\alpha} > 1\r)
\end{equation}
where $\Lambda(y)$ and $T \in \mathbb{R}^2$ denote a HPPP with density $y$ and the coordinate of a node, respectively, and 
\begin{equation} \label{Eq:lambdastar1}
\tau_p = \theta_p^{\frac{2}{\alpha}}\l(p_t\lambda_s\l(\frac{P_s}{P_p}\r)^{\frac{2}{\alpha}} + \lambda_p\r). 
\end{equation}
\end{lemma}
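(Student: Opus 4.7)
The plan is to start from the definition $\Pout^{(p)} = \Pr(P_p/(I_{pp}+I_{sp}) < \theta_p)$ and reduce it, via the HPPP approximation of the transmitting-ST process, to the distribution of a shot-noise sum over a single HPPP whose density absorbs all the problem parameters. The key structural facts I will use are the mapping theorem (scaling an HPPP of density $\mu$ by a factor $c$ yields an HPPP of density $\mu/c^2$) and the superposition theorem (the union of two independent HPPPs is an HPPP with the sum of their densities).

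First I would rewrite the outage event as $I_{pp}+I_{sp} > P_p/\theta_p$, substitute $I_{pp}=\sum_{X\in\Phi_p}P_p|X|^{-\alpha}$ and $I_{sp}=\sum_{Y\in\Psi}P_s|Y|^{-\alpha}$ where $\Psi$ denotes the process of transmitting STs, divide through by $P_p$, and multiply by $\theta_p$ to arrive at
\begin{equation}
\theta_p\sum_{X\in\Phi_p}|X|^{-\alpha} + \theta_p\frac{P_s}{P_p}\sum_{Y\in\Psi}|Y|^{-\alpha} > 1. \nonumber
\end{equation}
Invoking the stated approximation, $\Psi$ is replaced by an independent HPPP of density $p_t\lambda_s$.

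Next I would absorb the scalar coefficients into the point locations via the mapping theorem. Setting $X'=\theta_p^{-1/\alpha}X$ converts the first sum into $\sum|X'|^{-\alpha}$ over an HPPP of density $\lambda_p\theta_p^{2/\alpha}$; setting $Y'=(\theta_pP_s/P_p)^{-1/\alpha}Y$ converts the second sum into $\sum|Y'|^{-\alpha}$ over an independent HPPP of density $p_t\lambda_s\theta_p^{2/\alpha}(P_s/P_p)^{2/\alpha}$. By the superposition theorem the union of these two independent scaled processes is an HPPP $\Lambda(\tau_p)$ with
\begin{equation}
\tau_p = \theta_p^{2/\alpha}\left(\lambda_p + p_t\lambda_s\left(\frac{P_s}{P_p}\right)^{2/\alpha}\right), \nonumber
\end{equation}
which matches \eqref{Eq:lambdastar1}, and the outage event becomes exactly $\sum_{T\in\Lambda(\tau_p)}|T|^{-\alpha}>1$, yielding \eqref{Approx:PoutPrimary}.

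The main subtlety is not any one calculation but making sure the independence assumption injected by the Poisson approximation is consistent throughout: the scaled PT process and the scaled (approximated) transmitting-ST process must be independent for the superposition step, which is exactly the assumption flagged in the paragraph preceding the lemma. Everything else is a routine application of HPPP mapping and superposition, so I expect the write-up to be short and the only thing worth emphasizing is why the two scaling factors $\theta_p^{-1/\alpha}$ and $(\theta_pP_s/P_p)^{-1/\alpha}$ are the unique choices that normalize each interference contribution to unit transmit power.
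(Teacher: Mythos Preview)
Your proposal is correct and follows essentially the same route as the paper: approximate the transmitting-ST process by an independent HPPP of density $p_t\lambda_s$, apply the mapping theorem to absorb the scalar factors $\theta_p$ and $\theta_p P_s/P_p$ into rescaled HPPPs of densities $\theta_p^{2/\alpha}\lambda_p$ and $(\theta_p P_s/P_p)^{2/\alpha}p_t\lambda_s$, and then superpose. The only detail the paper makes explicit that you gloss over is that the true $I_{sp}$ sums over $\Phi_t\setminus b(X^\star,r_g)$ (the guard zone around the serving PT is excluded), but this is precisely what is being discarded by the HPPP approximation, so it does not affect your argument.
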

\begin{proof}
Let $\Phi_t \subset \Phi_s$ denote the point process of transmitting STs. Then $I_{sp}$ can be written as $I_{sp} = \sum_{Y\in \Phi_t \backslash b(X^\star,r_g)}P_s|Y|^{-\alpha}$ where $X^\star$ is the PT corresponding to the typical  primary receiver at the origin. Let  $I'_{sp} = \sum_{T \in \Lambda(p_t\lambda_s)} P_s|T|^{-\alpha}$ and  we approximate $I_{sp}$ as $I'_{sp}$. Then, it follows that
\begin{align}
&\Pout^{(p)} \approx \Pr \l(\frac{P_p}{I_{pp} + I'_{sp}} < \theta_p  \r) \\
\label{Eq:Mapping1} & = \Pr\l( \sum_{T \in \Lambda(\lambda_1^{(p)})} |T|^{-\alpha} +  \sum_{T \in \Lambda(\lambda_2^{(p)})}|T|^{-\alpha} >1 \r) \\ 
\label{Eq:Superposition1}& = \Pr\l( \sum_{T \in \Lambda(\lambda_1^{(p)} + \lambda_2^{(p)})} |T|^{-\alpha} >1 \r),
\end{align}
where $\lambda_1^{(p)} = \theta_p^{2/\alpha}\lambda_p$, $\lambda_2^{(p)} =\l(\frac{\theta_p P_s}{P_p}\r)^{2/\alpha}p_t\lambda_s $, and hence $\lambda_1^{(p)} + \lambda_2^{(p)} = \tau_p$. Note that \eqref{Eq:Mapping1} is from the fact that the point process $a\Lambda(\lambda)$ with $a>0$ is a HPPP with density $\lambda/a^2$ which can be proved by the Mapping Theorem \cite{Huang:ThroughputAdHocEnergyHarvesting:2012}, and \eqref{Eq:Superposition1} is from the Superposition Theorem \cite{PoissonProcesses}.
\end{proof}
It is worth mentioning that $\Pout^{(p)}$ depends only on $\tau_p$ and is an increasing function of $\tau_p$.

For the secondary network, it is important to note that the outage probability for the typical secondary receiver located at the origin is implicitly conditioned on that the  typical ST corresponding to this receiver is active (transmitting). Consequently, the typical ST must be lying outside  guard zones  and hence there are no PTs inside the disk of radius $r_g$ centered at the typical  ST. Let the event of this condition be denoted by $\mathcal{E} = \{\Phi_p \cap b(Y^\star,r_g) = \emptyset\}$, where $Y^\star$ denote the transmitting ST corresponding to the receiver at the origin. Then $\Pout^{(s)}$ can be  written as
\begin{equation} \label{Eq:PoutSecondary}
\Pout^{(s)} = \Pr\l(\frac{P_s}{I_{ps}+I_{ss}} <\theta_s \l| \mathcal{E} \r.\r),
\end{equation}
where  $I_{ps} = I_{pp}$ and $I_{ss} = \sum_{Y\in\Phi_t}P_s|Y|^{-\alpha}$. From the law of total probability, 
\begin{equation} \label{Eq:PoutSecondary2}
\Pout^{(s)} = \frac{\Pr(\frac{P_s}{I_{ps}+I_{ss}}<\theta_s) - \Pr\l(\frac{P_s}{I_{ps}+I_{ss}} <\theta_s \l| \bar{\mathcal{E}} \r.\r)\Pr(\bar{\mathcal{E}})}{\Pr(\mathcal{E})}.
\end{equation}
Note that $\bar{\mathcal{E}} = \{\Phi_p \cap b(Y^\star,r_g) \neq \emptyset\}$. Similar to Lemma~\ref{Lemma:Approx:PoutPrimary}, $I_{sp}'$ is a good approximation of $I_{ss}$, and hence we obtain the following lemma.

\begin{lemma} \label{Lemma:Approx:PoutSecondary}
By approximating  the process of transmitting STs as a HPPP with density $p_t\lambda_s$, the outage probability of the secondary network can be approximated as
\begin{equation} \label{Approx:PoutSecondary}
\Pout^{(s)} \approx \frac{\Pr\l(\sum_{T \in \Lambda(\tau_s)} |T|^{-\alpha} > 1\r)-p_g}{1-p_g},
\end{equation}
where 
\begin{equation} \label{Eq:lambdastar2}
\tau_s = \theta_s^{\frac{2}{\alpha}}\l(p_t\lambda_s + \lambda_p\l(\frac{P_s}{P_p}\r)^{-\frac{2}{\alpha}}\r).
\end{equation}
\end{lemma}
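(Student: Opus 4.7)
The plan is to evaluate each of the three probabilities appearing in the total-probability decomposition \eqref{Eq:PoutSecondary2}. Since $\Pr(\mathcal{E})=1-p_g$ and $\Pr(\bar{\mathcal{E}})=p_g$ follow immediately from \eqref{Eq:GuardZoneRe:a}, the real work is to handle the unconditional outage probability $\Pr(P_s/(I_{ps}+I_{ss})<\theta_s)$ and its counterpart conditioned on $\bar{\mathcal{E}}$.

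For the unconditional term I would mimic the argument of Lemma~\ref{Lemma:Approx:PoutPrimary}, replacing $I_{ss}$ by its Poisson surrogate $I_{sp}'=\sum_{T\in\Lambda(p_t\lambda_s)}P_s|T|^{-\alpha}$. Rewriting the SIR inequality as $\theta_s(P_p/P_s)\sum_{X\in\Phi_p}|X|^{-\alpha}+\theta_s\sum_{T\in\Lambda(p_t\lambda_s)}|T|^{-\alpha}>1$, applying the Mapping Theorem to each sum to absorb the multiplicative constants into the respective densities, and then invoking the Superposition Theorem to merge the two resulting independent HPPPs produces a single HPPP of density $\tau_s$ as in \eqref{Eq:lambdastar2}. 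This delivers $\Pr(P_s/(I_{ps}+I_{ss})<\theta_s)\approx\Pr(\sum_{T\in\Lambda(\tau_s)}|T|^{-\alpha}>1)$.

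The conditional term $\Pr(P_s/(I_{ps}+I_{ss})<\theta_s\mid\bar{\mathcal{E}})$ is the crux of the proof. On $\bar{\mathcal{E}}$ there is at least one PT within distance $r_g$ of $Y^\star$, hence within distance at most $r_g+1$ of the typical secondary receiver at the origin. The interference contributed by this lone PT is therefore at least $P_p(r_g+1)^{-\alpha}$, which under the standing assumption $P_p\gg P_s$ swamps $P_s/\theta_s$ and drives the SIR below $\theta_s$ with probability essentially one. I would accordingly approximate $\Pr(\cdot\mid\bar{\mathcal{E}})\approx 1$.

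Substituting $\Pr(\bar{\mathcal{E}})=p_g$, the conditional probability $\approx 1$, and the Poisson-approximated unconditional probability into \eqref{Eq:PoutSecondary2} then produces \eqref{Approx:PoutSecondary}. The main obstacle is justifying $\Pr(\cdot\mid\bar{\mathcal{E}})\approx 1$: it is a heuristic step that leans on the $P_p\gg P_s$ regime and is the second source, beyond the Poissonisation of the transmitting STs, of the $\approx$ sign in the lemma statement.
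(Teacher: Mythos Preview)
Your proposal is correct and follows essentially the same route as the paper: approximate the unconditional term via the Poisson surrogate for $I_{ss}$ together with the Mapping and Superposition Theorems to obtain the single HPPP of density $\tau_s$, argue $\Pr(\cdot\mid\bar{\mathcal{E}})\approx 1$ from $P_p\gg P_s$, and substitute into \eqref{Eq:PoutSecondary2} using $\Pr(\mathcal{E})=1-p_g$. Your explicit $r_g+1$ distance bound for the nearby PT is a slight sharpening of the paper's more informal justification of the conditional-outage step, but otherwise the arguments coincide.
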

\begin{proof}
Assuming that $I_{sp}'$ is a good approximation of $I_{ss}$, it then follows that 
\begin{align}
& \Pr(\frac{P_s}{I_{ps}+I_{ss}}<\theta_s) \approx \Pr \l(\frac{P_s}{I_{ps} + I'_{sp}} < \theta_s  \r) \\
\label{Eq:Superposition2}&= \Pr \l( \sum_{T\in\Lambda(\lambda_1^{(s)} + \lambda_2^{(s)})}|T|^{-\alpha} > 1\r), 
\end{align}
where $\lambda_1^{(s)} = \theta_s^{2/\alpha}p_t\lambda_s$, $\lambda_2^{(s)}=\l(\frac{\theta_s P_s}{P_p}\r)^{-2/\alpha}\lambda_p$, and thus $\lambda_1^{(s)} + \lambda_2^{(s)} = \tau_s$. Note that  \eqref{Eq:Superposition2} follows the same arguments as for \eqref{Eq:Mapping1} and \eqref{Eq:Superposition1}. Next, with the aforementioned assumption that $P_p \gg P_s$,
it is likely that interference from even only one PT inside $b(Y^\star, r_g)$ is sufficient to cause outage at the typical secondary receiver at the origin. Consequently, we have $ \Pr\l(\frac{P_s}{I_{ps}+I_{ss}} <\theta_s \l| \bar{\mathcal{E}} \r.\r) \approx 1$. Substituting this result, \eqref{Eq:Superposition2} and the fact that $\Pr(\mathcal{E})=e^{-\pi\lambda_p r_g^2}=1-p_g$ into \eqref{Eq:PoutSecondary2}  yields the result in  \eqref{Approx:PoutSecondary}.
\end{proof}

Note that similarly as $\Pout^{(p)}$, $\Pout^{(s)}$ is an increasing function of $\tau_s$.


\subsection{Network Throughput Maximization}
The throughput of the secondary network, defined in \eqref{Eq:NetThroughput}, is maximized over  $P_s$ and $\lambda_s$, for a given pair of $P_p$ and $\lambda_p$ and under primary and secondary outage-probability constraints. The optimization problem can be written as
\begin{align}
\mbox{(P1):}\quad \max_{P_s, \lambda_s} & \quad  p_t\lambda_s \mbox{log}_2(1+\theta_s)\\
\mbox{s.t.}& \quad \Pout^{(p)} \leq \epsilon_p \\
& \quad \Pout^{(s)} \leq \epsilon_s.
\end{align}
As stated in Section~\ref{Sec:TxProb}, we assume $0<P_s \leq \eta P_p r_h^{-\alpha}$ in this paper and thus the transmission probability $p_t$ is a constant for a given pair of $r_g$ and $r_h$. Therefore, the maximum network throughput is achieved by simply maximizing $\lambda_s$ under the outage constraints. To solve Problem (P1), we use the nominal node densities \cite{Huang:ThroughputAdHocEnergyHarvesting:2012}, denoted as $\mu_p$ and $\mu_s$,  for HPPPs $\Lambda(\mu_p)$ and $\Lambda(\mu_s)$, respectively, such that 
\begin{equation}
\Pr\l( \sum_{T \in \Lambda(\mu_p)} |T|^{-\alpha} >1 \r) = \epsilon_p,
\end{equation}
\begin{equation}
\Pr\l( \sum_{T \in \Lambda(\mu_s) } |T|^{-\alpha} >1 \r) = (1-p_g)\epsilon_s + p_g.
\end{equation}
It should be taken into account that $\mu_p$ and $\mu_s$ are monotone-increasing functions of $\epsilon_p$ and $\epsilon_s$, respectively, which in general have no closed-form expressions and can only be found by simulation.  With these parameters, the solution of Problem (P1) can be found as given in the following proposition.

\begin{proposition} \label{Prop:Opt_Throughput}
The maximum network throughput of the secondary network  is given by:
\begin{enumerate}

\item If $\eta r_h^{-\alpha} < \frac{\theta_s}{\theta_p}\l(\frac{\mu_s}{\mu_p}\r)^{-\frac{\alpha}{2}}$, 
\begin{equation}
\mathcal{C}_s^* = \l(\theta_s^{-\frac{2}{\alpha}}\mu_s - \eta^{-\frac{2}{\alpha}}r_h^2\lambda_p\r)\log_2(1+\theta_s), \label{Eq:OptThroughput1}
\end{equation}
where the optimal ST transmit power is
\begin{equation} \label{Eq:OptTxPower1}
P_s^* = \eta P_p r_h^{-\alpha},
\end{equation}
and the optimal ST density is 
\begin{equation} \label{Eq:OptSTDen1}
\lambda_s^* = \frac{1}{p_t}\l(\theta_s^{-\frac{2}{\alpha}}\mu_s - \eta^{-\frac{2}{\alpha}}r_h^2\lambda_p\r).
\end{equation}

\item If $\eta r_h^{-\alpha} \geq \frac{\theta_s}{\theta_p}\l(\frac{\mu_s}{\mu_p}\r)^{-\frac{\alpha}{2}}$,
\begin{equation} \label{Eq:OptThroughput2}
\mathcal{C}_s^*= \l(\theta_s^{-\frac{2}{\alpha}}\mu_s - \l(\frac{\theta_s}{\theta_p}\r)^{-\frac{2}{\alpha}}\frac{\mu_s}{\mu_p}\lambda_p\r)\log_2(1+\theta_s),
\end{equation}
where the optimal ST transmit power is
\begin{equation} \label{Eq:OptTxPower2}
P_s^* = \frac{\theta_s}{\theta_p}\l(\frac{\mu_s}{\mu_p}\r)^{-\frac{\alpha}{2}}P_p,
\end{equation}
and the optimal ST density is
\begin{equation}
\lambda_s^* =  \frac{1}{p_t}\l(\theta_s^{-\frac{2}{\alpha}}\mu_s - \l(\frac{\theta_s}{\theta_p}\r)^{-\frac{2}{\alpha}}\frac{\mu_s}{\mu_p}\lambda_p\r).
\end{equation}
\end{enumerate}
\end{proposition}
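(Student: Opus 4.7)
The plan is to first observe that under the standing assumption $0 < P_s \leq \eta P_p r_h^{-\alpha}$ from Section~\ref{Sec:TxProb}, Proposition~\ref{Prop:TxProb:Case1} shows that $p_t$ depends only on $r_g$, $r_h$, and $\lambda_p$, and is independent of the decision variables. Hence Problem~(P1) collapses to maximizing $\lambda_s$ subject to the two outage constraints together with the cap $P_s \le \eta P_p r_h^{-\alpha}$. Since $\Pr\bigl(\sum_{T\in\Lambda(y)}|T|^{-\alpha}>1\bigr)$ is strictly increasing in $y$ and equals $\epsilon_p$ at $y=\mu_p$ and $(1-p_g)\epsilon_s+p_g$ at $y=\mu_s$, Lemmas~\ref{Lemma:Approx:PoutPrimary} and \ref{Lemma:Approx:PoutSecondary} convert the outage constraints into the deterministic inequalities $\tau_p \le \mu_p$ and $\tau_s \le \mu_s$, with $\tau_p,\tau_s$ as in \eqref{Eq:lambdastar1} and \eqref{Eq:lambdastar2}.

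Next I would introduce $x := (P_s/P_p)^{2/\alpha}$, which ranges over $(0,\eta^{2/\alpha}r_h^{-2}]$, and rewrite the two constraints as linear upper bounds on $\lambda_s$:
\begin{equation*}
\lambda_s \le \frac{\theta_p^{-2/\alpha}\mu_p - \lambda_p}{p_t\,x}, \qquad
\lambda_s \le \frac{\theta_s^{-2/\alpha}\mu_s - \lambda_p/x}{p_t}.
\end{equation*}
The first is strictly decreasing in $x$ and the second strictly increasing, so their pointwise minimum (the true feasible upper bound on $\lambda_s$) is maximized either at the crossing point of the two lines or at the right endpoint $x_{\max} := \eta^{2/\alpha}r_h^{-2}$, whichever is smaller. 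Elementary algebra gives the crossing point $x^{\mathrm{unc}} = (\theta_s/\theta_p)^{2/\alpha}\mu_p/\mu_s$, corresponding to an unconstrained-optimal power $(\theta_s/\theta_p)(\mu_s/\mu_p)^{-\alpha/2}P_p$.

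The two cases of the proposition correspond exactly to which of these is binding. A direct comparison shows $x_{\max} < x^{\mathrm{unc}}$ iff $\eta r_h^{-\alpha} < (\theta_s/\theta_p)(\mu_s/\mu_p)^{-\alpha/2}$; in this regime only the secondary (increasing) bound is active at the optimum, so $P_s^* = \eta P_p r_h^{-\alpha}$ and plugging $x=x_{\max}$ into the secondary bound yields \eqref{Eq:OptSTDen1} and, after multiplication by $p_t\log_2(1+\theta_s)$, the throughput \eqref{Eq:OptThroughput1}. In the complementary regime the crossing point is feasible, both constraints are simultaneously tight, and substituting $x^{\mathrm{unc}}$ back recovers \eqref{Eq:OptTxPower2} and \eqref{Eq:OptThroughput2}.

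The only real subtlety is keeping exponents straight during the change of variable: the primary constraint involves $(P_s/P_p)^{2/\alpha}$ while the secondary involves its reciprocal, which is precisely why the two bounds move in opposite directions in $x$. Implicit throughout is that the right-hand sides of both bounds are positive at the chosen $x$; otherwise Problem~(P1) would admit no strictly positive $\lambda_s$ and the claim would be vacuous.
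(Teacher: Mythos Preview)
Your proposal is correct and follows essentially the same approach as the paper: reduce (P1) to maximizing $\lambda_s$ via the equivalences $\Pout^{(p)}\le\epsilon_p\Leftrightarrow\tau_p\le\mu_p$ and $\Pout^{(s)}\le\epsilon_s\Leftrightarrow\tau_s\le\mu_s$, express each constraint as an upper bound on $\lambda_s$ that is respectively decreasing and increasing in the power variable, and then compare the crossing point with the cap $\eta P_p r_h^{-\alpha}$. The only cosmetic difference is your substitution $x=(P_s/P_p)^{2/\alpha}$, whereas the paper works directly with $P_s$ and functions $f_1(P_s),f_2(P_s)$ and appeals to a picture of the admissible region; the case split and resulting formulas are identical.
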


\begin{figure}
\centering
\subfigure[$\eta r_h^{-\alpha}<\frac{\theta_s}{\theta_p}\l(\frac{\mu_s}{\mu_p}\r)^{-\frac{\alpha}{2}}$]{
\centering
\includegraphics[width=4.2cm]{./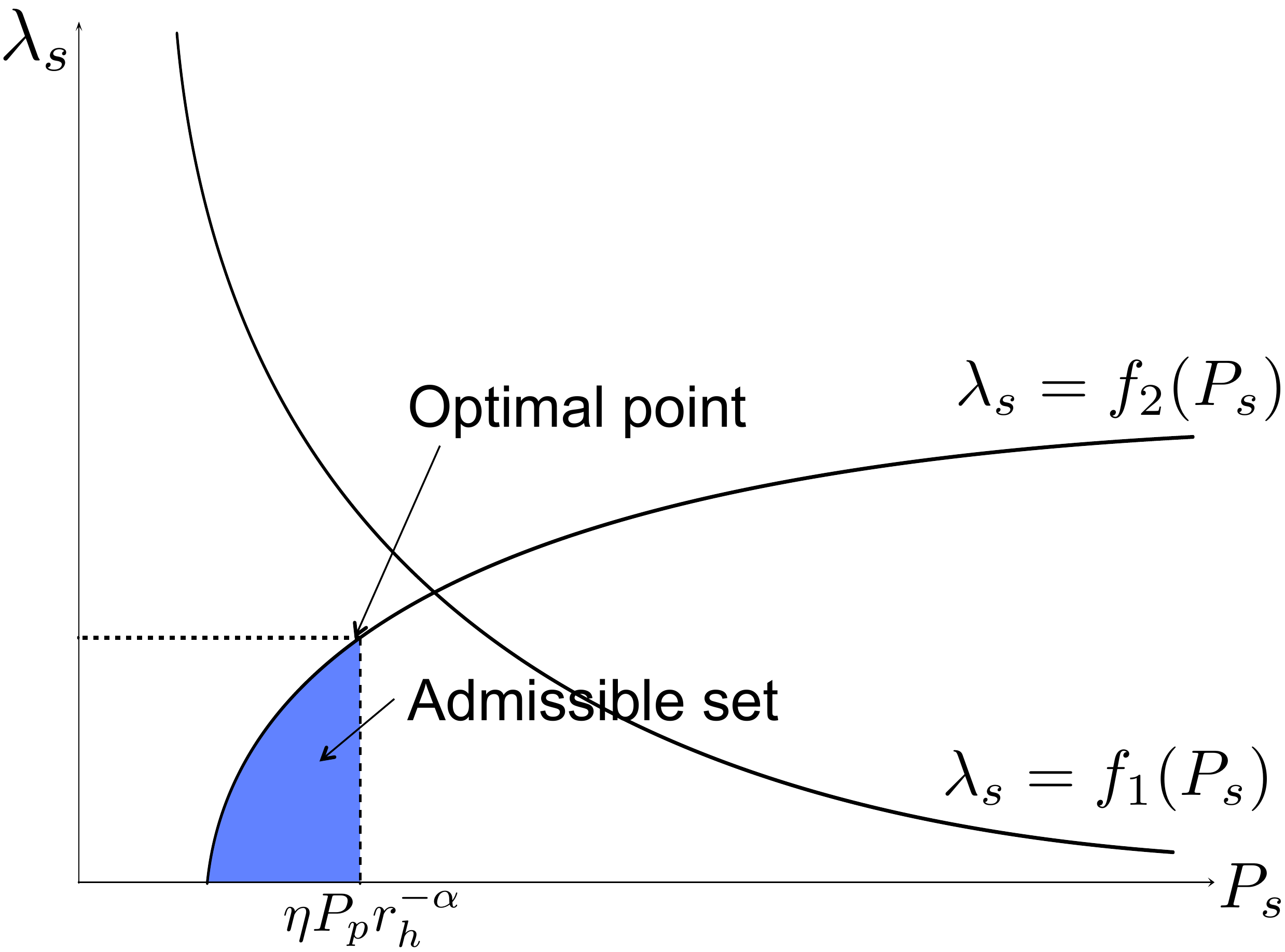}\vspace{5pt}\label{AdmissibleSet1}}
\subfigure[$\eta r_h^{-\alpha} \geq \frac{\theta_s}{\theta_p}\l(\frac{\mu_s}{\mu_p}\r)^{-\frac{\alpha}{2}}$]{
\centering
\includegraphics[width=4.2cm]{./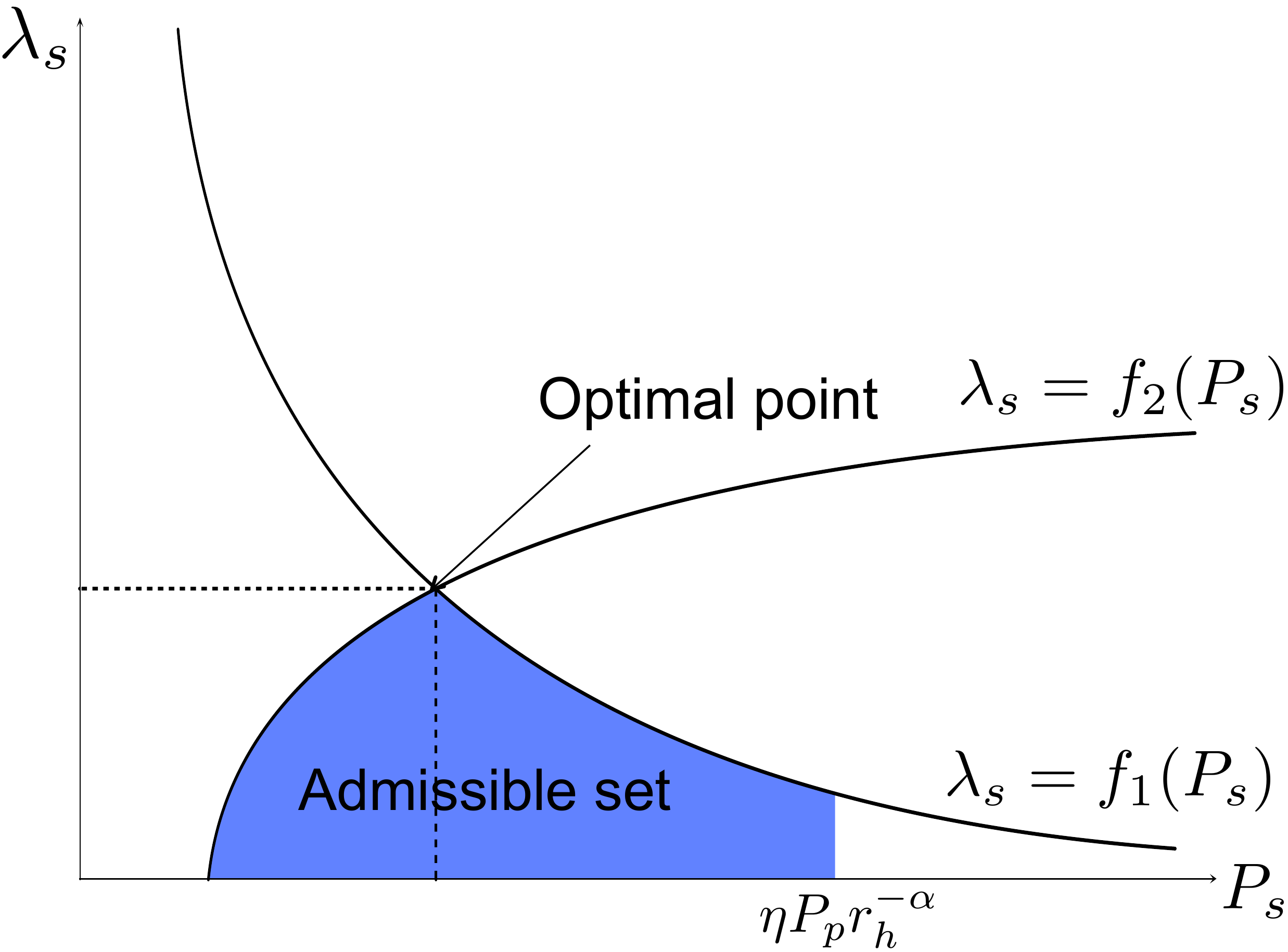}\vspace{5pt}\label{AdmissibleSet2}} \vspace{-5pt}
\caption{The admissible set of $(P_s,\lambda_s)$ shown as the shaded region for Problem (P1).}\vspace{-15pt} \label{Fig:AdmissibleSet}
\end{figure}

\begin{proof}
Since $\Pout^{(p)}$ and $\Pout^{(s)}$ are increasing functions of $\tau_p$ and $\tau_s$, respectively, Problem (P1) is equivalent to maximizing $\lambda_s$ subject to $\tau_p \leq \mu_p$, $\tau_s \leq \mu_s$ and $0< P_s \leq \eta P_p r_h^{-\alpha}$. In addition, from \eqref{Eq:lambdastar1} and \eqref{Eq:lambdastar2}, $\tau_p \leq \mu_p$ and $\tau_s \leq \mu_s$ are equivalent to $\lambda_s \leq f_1(P_s)$ and $\lambda_s \leq f_2(P_s)$, respectively, where
\begin{equation} \vspace{-5pt}
f_1(P_s) = \frac{1}{p_t}\l(\theta_p^{-\frac{2}{\alpha}}\mu_p - \lambda_p\r)\l(\frac{P_s}{P_p}\r)^{-\frac{2}{\alpha}},
\end{equation}
\begin{equation}
f_2(P_s) = \frac{1}{p_t}\l(\theta_s^{-\frac{2}{\alpha}}\mu_s - \lambda_p \l(\frac{P_s}{P_p}\r)^{-\frac{2}{\alpha}}\r).
\end{equation}
As plotted in Fig.~\ref{Fig:AdmissibleSet}, $f_1(P_s)$ decreases whereas $f_2(P_s)$ increases with growing $P_s$. The intersection point of the two curves $\lambda_s = f_1(P_s)$ and $\lambda_s = f_2(P_s)$ can be found by solving $f_1(P_s) = f_2(P_s)$, which yields $(P_s,\lambda_s) = \l(\frac{\theta_s}{\theta_p}\l(\frac{\mu_s}{\mu_p}\r)^{-\frac{\alpha}{2}}P_p, \frac{\mu_s(\mu_p-\theta_p^{\frac{2}{\alpha}}\lambda_p)}{p_t\theta_s^{\frac{2}{\alpha}}\mu_p}\r)$. The shaded region in Fig.~\ref{Fig:AdmissibleSet} shows the admissible set of $(P_s, \lambda_s)$ that satisfies $\lambda_s \leq f_1(P_s)$, $\lambda_s \leq f_2(P_s)$ and $0< P_s \leq \eta P_p r_h^{-\alpha}$. It can be observed from Fig.~\ref{AdmissibleSet1} that if  $\eta P_p r_h^{-\alpha} < \frac{\theta_s}{\theta_p}\l(\frac{\mu_s}{\mu_p}\r)^{-\frac{\alpha}{2}} P_p$, $\lambda_s$ attains its maximum when $P_s^*=\eta P_p r_h^{-\alpha}$. Thus, the results in \eqref{Eq:OptThroughput1} and $\eqref{Eq:OptSTDen1}$ are obtained from $\lambda_s^* = f_2(P_s^*)$. Otherwise, if $\eta P_p r_h^{-\alpha} \geq \frac{\theta_s}{\theta_p}\l(\frac{\mu_s}{\mu_p}\r)^{-\frac{\alpha}{2}}P_p$, it is observed from Fig.~\ref{AdmissibleSet2} that the optimal value of $\lambda_s$ is the intersection of $\lambda_s = f_1(P_s)$ and $\lambda_s = f_2(P_s)$, as obtained above.
\end{proof}

Note that if $\eta r_h^{-\alpha} < \frac{\theta_s}{\theta_p}\l(\frac{\mu_s}{\mu_p}\r)^{-\frac{\alpha}{2}}$, the secondary transmission power is small enough such that satisfying the secondary outage constraint guarantees  satisfying the primary outage constraint. Also note that $\mathcal{C}_s^*$ for the case of $\eta r_h^{-\alpha} \geq \frac{\theta_s}{\theta_p}\l(\frac{\mu_s}{\mu_p}\r)^{-\frac{\alpha}{2}}$ is always larger than that of $\eta r_h^{-\alpha} < \frac{\theta_s}{\theta_p}\l(\frac{\mu_s}{\mu_p}\r)^{-\frac{\alpha}{2}}$. 

Two remarks on the trade-offs between the two coexisting networks are in order.
\begin{enumerate}
\item The maximum secondary network throughput $\mathcal{C}_s^*$ decreases linearly with growing PT density $\lambda_p$. Also, the optimal ST density $\lambda_s^*$ is inversely proportional to the secondary transmission probability $p_t$.

\item If $\lambda_p$ goes to zero, $\mathcal{C}_s^*$ converges to its largest value; however, since $p_t$ converges to zero as $\lambda_p \rightarrow 0$, $\lambda_s^*$ diverges to infinity. It means that although sparse PT density results in larger secondary throughput, an extremely large number of STs  should be deployed to achieve the maximum value.   

\end{enumerate}

\section{Numerical Results} \label{Sec:Simulation}


Numerical results are provided in this section. The path-loss exponent is set as $\alpha=4$, the power ratio between the two networks is $\frac{P_s}{P_p} =0.1$, the harvesting efficiency is $\eta = 0.1$, and the radius of guard zone, harvesting zone is $r_g=2$, $r_h=1$.


\begin{figure}
\centering
\includegraphics[width=7cm]{./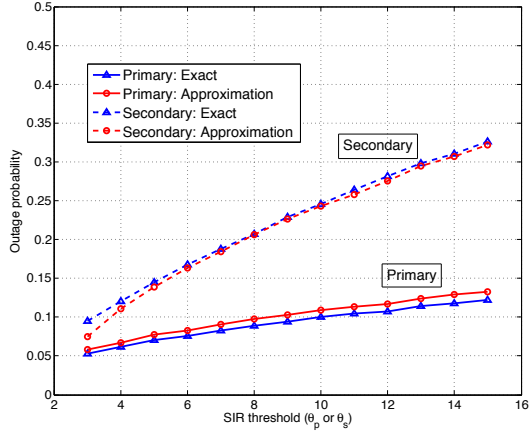}\vspace{-5pt}
\caption{ Comparison between the approximated and simulated outage probabilities with $\lambda_p=0.01$ and $\lambda_s = 0.1$.  \label{Fig:Outage} }\vspace{-15pt}
\end{figure}

Fig.~\ref{Fig:Outage} compares the outage probabilities obtained by simulation and those based on approximations in \eqref{Approx:PoutPrimary} and \eqref{Approx:PoutSecondary}. It is observed that our approximations are quite accurate.

\begin{figure}
\centering
\subfigure[Optimal ST density]{
\centering
\includegraphics[width=4.2cm]{./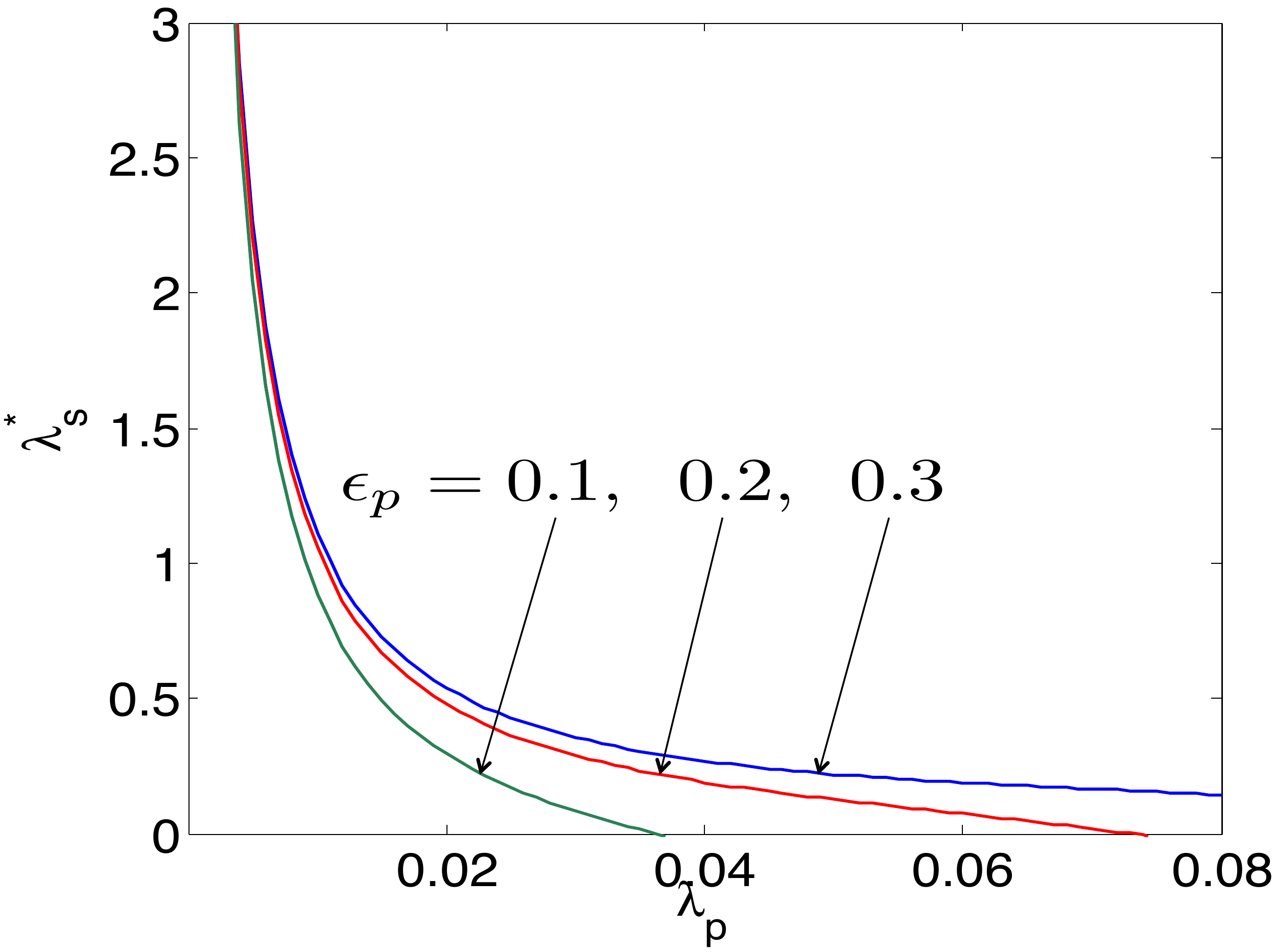}\vspace{3pt}\label{Fig:Opt_STDen}}
\subfigure[Maximum secondary throughput]{
\centering
\includegraphics[width=4.2cm]{./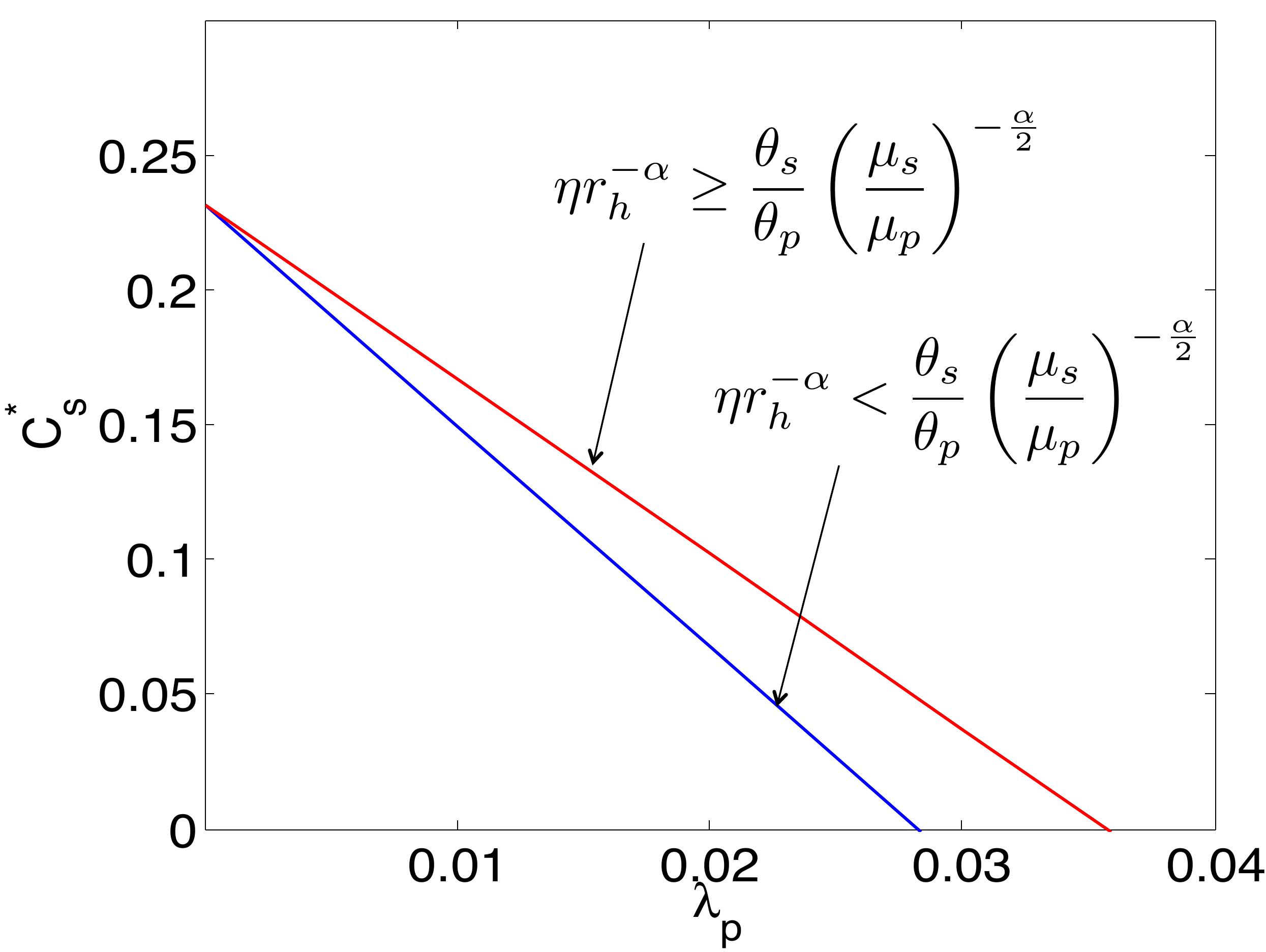}\vspace{3pt}\label{Fig:Opt_Throughput}} \vspace{-7pt}
\caption{Optimal ST density $\lambda_s^*$ and maximum secondary network throughput $\mathcal{C}_s^*$ versus PT density $\lambda_p$ with $\theta_p = \theta_s = 5$.} \label{Fig:Opt} \vspace{-15pt}
\end{figure}

The relation between $\lambda_s^*$ and $\lambda_p$ for the case of $\eta r_h^{-\alpha} \geq \frac{\theta_s}{\theta_p}\l(\frac{\mu_s}{\mu_p}\r)^{-\frac{\alpha}{2}}$ is plotted in Fig.~\ref{Fig:Opt_STDen} with three different values of   $\epsilon_p$ and $\epsilon_s = 0.4$. As mentioned above, $\lambda_p \rightarrow 0$ results in deploying an infinitely  large  number of STs to achieve the maximum secondary network throughput. Moreover, it is observed that larger $\epsilon_p$ allows deploying more STs as it permits more interference from the secondary network.  

The maximum secondary network throughput $\mathcal{C}_s^*$ versus $\lambda_p$ is plotted in Fig.~\ref{Fig:Opt_Throughput}. The parameters are set as $\epsilon_p = 0.35$ and $\epsilon_s = 0.4$ for the case of $\eta r_h^{-\alpha} < \frac{\theta_s}{\theta_p}\l(\frac{\mu_s}{\mu_p}\r)^{-\frac{\alpha}{2}}$, and $\epsilon_p = 0.2$ and $\epsilon_s = 0.4$ for the case of $\eta r_h^{-\alpha} \geq \frac{\theta_s}{\theta_p}\l(\frac{\mu_s}{\mu_p}\r)^{-\frac{\alpha}{2}}$. It is observed that $\mathcal{C}_s^*$ decreases linearly with increasing $\lambda_p$ .

\section{Conclusion} \label{Sec:Con}
In this paper, a novel scheme is proposed to enable the secondary network to harvest energy  and reuse the spectrum in the primary network. The secondary network throughput is maximized  and derived in a closed form based on a stochastic-geometry network model under outage constraints. 

\bibliographystyle{ieeetr}


\end{document}